\providecommand{\U}[1]{\protect\rule{.1in}{.1in}}
\newtheorem{theorem}{Theorem}
\newtheorem{lemma}[theorem]{Lemma}
\newtheorem{proposition}[theorem]{Proposition}
\newtheorem{remark}[theorem]{Remark}
\newenvironment{proof}[1][Proof]{\noindent\textbf{#1.} }{\ \rule{0.5em}{0.5em}}
\begin{document}

\title{Construction of Multivariate Gaussian Weyl--Heisenberg Frames, (I)}
\author{Maurice de Gosson\thanks{Financed by the Austrian Research Agency FWF
(Projektnummer P20442-N13).}\\\textit{Universit\"{a}t Wien, NuHAG}\\\textit{Fakult\"{a}t f\"{u}r Mathematik }\\\textit{A-1090 Wien}}
\maketitle

\begin{abstract}
(This Note replaces a former note which contains an incorrect proof) .Let
$\phi$ be an arbitrary generalized Gaussian (squeezed coherent state),
$\Lambda_{\alpha\beta}=(\alpha_{1}\mathbb{Z}\times\cdot\cdot\cdot\times
\alpha_{n}\mathbb{Z)\times}(\beta_{1}\mathbb{Z}\times\cdot\cdot\cdot
\times\beta_{n}\mathbb{Z)}$ a rectangular lattice. We show that there exists a
positive definite symplectic matrix $M$ (depending on $\phi$) such that the
multivariate Weyl--Heisenberg system $\mathcal{G}(\phi,M\Lambda_{\alpha\beta
})$ is a frame. In a forthcoming Note we will prove a converse to this result.

\end{abstract}

\section{Introduction}

Let $\phi\in L^{2}(\mathbb{R}^{n})$, $\phi\neq0$ and a lattice $\Lambda
\subset\mathbb{R}^{2n}$. For $z_{0}\in\mathbb{R}^{2n}$ define the Heisenberg
operator $\widehat{T}(z_{0}):L^{2}(\mathbb{R}^{n})\longrightarrow
L^{2}(\mathbb{R}^{n})$ by
\begin{equation}
\widehat{T}(z_{0})\psi=e^{\tfrac{i}{\hslash}(p_{0}\cdot x-\tfrac{1}{2}%
p_{0}\cdot x_{0})}\psi(x-x_{0}) \label{heiwe}%
\end{equation}
($\hbar$ is a positive constant, usually taken to be $1/2\pi$ in
time-frequency analysis, and to $h/2\pi$ in quantum mechanics; $h$ is Planck's
constant). The set $\mathcal{G}(\phi,\Lambda)=\{\widehat{T}(z)\phi:z\in
\Lambda\}$ is called a Weyl--Heisenberg (or Gabor) system. If $\mathcal{G}%
(\phi,\Lambda)$ is a frame in $L^{2}(\mathbb{R}^{n})$, i.e. \cite{Gro} if
there exist $a,b>0$ such that
\begin{equation}
a||\psi||^{2}\leq\sum_{z\in\Lambda}|(\psi|\widehat{T}(z)\phi)|^{2}\leq
b||\psi||^{2} \label{frame1}%
\end{equation}
for every $\psi\in L^{2}(\mathbb{R}^{n})$ then $\mathcal{G}(\phi,\Lambda)$ is
called a Weyl--Heisenberg (or Gabor) frame.

\begin{remark}
The function $z\longmapsto(\psi|\widehat{T}(z)\phi)$ is, up to the factor
$(2\pi\hbar)^{n}$, the cross-ambiguity function $A(\psi,\phi)=F_{\sigma}%
W(\psi,\phi)$ ($F_{\sigma}$ the symplectic Fourier transform and $W(\psi
,\phi)$ the cross-Wigner distribution).
\end{remark}

A particularly interesting situation occurs when one chooses a Gaussian window
$\phi$ because Gaussians play a privileged role in both time-frequency
analysis and quantum mechanics \cite{Folland,Birk,Birkbis,Gro}. A classical
result is the following necessary and sufficient condition in the case $n=1$,
due to Lyubarski \cite{Lyu} and Seip and Wallst\'{e}n \cite{sewa92}:

\begin{proposition}
\label{Prop1}Let $\phi_{1}(x)=(\pi\hbar)^{-1/4}e^{-x^{2}/2\hbar}$ (the
\textquotedblleft fiducial coherent state\textquotedblright) with
$x\in\mathbb{R}$ and $\Lambda_{\alpha\beta}=\alpha\mathbb{Z}\times
\beta\mathbb{Z}$. The Gabor system $\mathcal{G}(\phi_{1},\Lambda_{\alpha\beta
})$ is a frame for $L^{2}(\mathbb{R})$ if and only if $\alpha\beta<2\pi\hbar$.
\end{proposition}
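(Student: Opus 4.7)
The plan is to reduce (\ref{frame1}) with $\phi=\phi_{1}$ to a \emph{sampling inequality} for entire functions in the Bargmann--Fock space $\mathcal{F}^{2}(\mathbb{C})$, and then to invoke the density characterization of sampling lattices due to Seip--Wallst\'{e}n. The key tool is the Bargmann transform $B:L^{2}(\mathbb{R})\to\mathcal{F}^{2}(\mathbb{C})$. With the normalization $w(z_{0})=(x_{0}-ip_{0})/\sqrt{2\hbar}$, the operator $B$ sends $\phi_{1}$ to the constant function $1$ (up to normalization) and sends $\widehat{T}(z_{0})\phi_{1}$ to a unimodular phase times the reproducing kernel of $\mathcal{F}^{2}$ evaluated at $w(z_{0})$. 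Consequently, putting $F=B\psi$, one has $|(\psi|\widehat{T}(z)\phi_{1})|^{2}=c\,|F(w(z))|^{2}e^{-|w(z)|^{2}}$ for a universal constant $c$.

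With this dictionary the frame inequality (\ref{frame1}) becomes equivalent to
\begin{equation*}
a'\|F\|_{\mathcal{F}^{2}}^{2}\leq\sum_{z\in\Lambda_{\alpha\beta}}|F(w(z))|^{2}e^{-|w(z)|^{2}}\leq b'\|F\|_{\mathcal{F}^{2}}^{2}\qquad(F\in\mathcal{F}^{2}),
\end{equation*}
which is precisely the assertion that $\Gamma:=w(\Lambda_{\alpha\beta})$ is a \emph{sampling lattice} for $\mathcal{F}^{2}$. The Seip--Wallst\'{e}n theorem characterizes sampling lattices by the strict density inequality $D(\Gamma)>1/\pi$. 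Since $w$ scales planar Lebesgue measure by a factor $1/(2\hbar)$, the lattice $\Gamma$ has density $2\hbar/(\alpha\beta)$, and the sampling threshold collapses to exactly $\alpha\beta<2\pi\hbar$.

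Sufficiency ($\alpha\beta<2\pi\hbar$) then unravels as follows: the upper Bessel bound reduces to a routine Plancherel--P\'{o}lya estimate in $\mathcal{F}^{2}$, while the lower bound is the sampling inequality furnished by the density theorem. The supercritical case $\alpha\beta>2\pi\hbar$ of necessity is comparatively easy, because once the density drops below $1/\pi$ a Weierstrass-product argument produces a nonzero $F\in\mathcal{F}^{2}$ vanishing on $\Gamma$, killing the lower frame bound. The genuinely delicate step---and the expected main obstacle---is the critical case $\alpha\beta=2\pi\hbar$: here one must exhibit, by hand, a nonzero element of $\mathcal{F}^{2}$ vanishing on the critical lattice, and this is where Lyubarskii's original argument invokes the Weierstrass $\sigma$-function attached to $\Gamma$, with a careful estimate showing that the resulting entire function lies in $\mathcal{F}^{2}$. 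Everything else is bookkeeping of the $2\pi\hbar$ factors in the Bargmann normalization.
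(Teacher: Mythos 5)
Your overall route---Bargmann transform, reduction to a sampling problem for the Fock space, Seip--Wallst\'{e}n density theorem---is exactly the one the paper relies on: Proposition \ref{Prop1} is quoted there without proof, the argument being delegated to \cite{Lyu} and \cite{sewa92}, and those references proceed essentially as you describe. Your bookkeeping is also consistent: with $w(z_{0})=(x_{0}-ip_{0})/\sqrt{2\hbar}$ and the weight $e^{-|w|^{2}}$ the critical density for sampling is $1/\pi$, the image lattice $\Gamma=w(\Lambda_{\alpha\beta})$ has density $2\hbar/(\alpha\beta)$, and the threshold is indeed $\alpha\beta=2\pi\hbar$.

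There is, however, a genuine error in your treatment of the critical case. You propose to ``exhibit, by hand, a nonzero element of $\mathcal{F}^{2}$ vanishing on the critical lattice.'' No such element exists: the critical (von Neumann) lattice is a uniqueness set for $\mathcal{F}^{2}$---equivalently, the Gabor system $\mathcal{G}(\phi_{1},\Lambda_{\alpha\beta})$ with $\alpha\beta=2\pi\hbar$ is \emph{complete} in $L^{2}(\mathbb{R})$, a classical fact going back to von Neumann, Bargmann and Perelomov. What fails at critical density is the lower sampling \emph{inequality}, not injectivity of the restriction map, so your proposed construction cannot be carried out. The standard repair is either (i) to invoke the Seip--Wallst\'{e}n theorem as stated, with its strict density inequality, which already disposes of the critical lattice, or (ii) to use the fact that a sampling set for $\mathcal{F}^{2}$ remains sampling after removal of finitely many points, and then produce a nonzero function---essentially $\sigma_{\Gamma}(w)/\bigl((w-\gamma_{1})(w-\gamma_{2})\bigr)$ with $\sigma_{\Gamma}$ the Weierstrass function of $\Gamma$---that lies in $\mathcal{F}^{2}$ and vanishes on $\Gamma\setminus\{\gamma_{1},\gamma_{2}\}$; removing a single point is not enough, since $\sigma_{\Gamma}(w)/(w-\gamma_{1})$ just fails to be square-integrable against $e^{-|w|^{2}}$. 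The rest of your sketch (the Bessel bound via Plancherel--P\'{o}lya, the supercritical case via a Weierstrass product of subcritical growth) is sound.
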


This result has the following non-trivial extension, proven in \cite{AB,AB2}:

\begin{proposition}
\label{Prop2}Let $\phi=\phi_{1}\otimes\cdot\cdot\cdot\otimes\phi_{1}$ and
$\Lambda_{\alpha\beta}=(\alpha_{1}\mathbb{Z}\times\cdot\cdot\cdot\times
\alpha_{n}\mathbb{Z)\times}(\beta_{1}\mathbb{Z}\times\cdot\cdot\cdot
\times\beta_{n}\mathbb{Z)}$. Then $\mathcal{G}(\phi,\Lambda_{\alpha\beta})$ is
a frame if and only if $\alpha_{j}\beta_{j}<2\pi\hbar$ for $1\leq j\leq n$.
\end{proposition}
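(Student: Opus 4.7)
The plan is to exploit tensor-product structure to reduce everything to the one-dimensional Proposition \ref{Prop1}. After reordering the coordinates of $\mathbb{R}^{2n}$, the lattice $\Lambda_{\alpha\beta}$ becomes the Cartesian product $\Lambda_{\alpha_1\beta_1}\times\cdots\times\Lambda_{\alpha_n\beta_n}$, and the window $\phi=\phi_1^{\otimes n}$ is already a pure tensor. The key computational observation is that the Heisenberg operator (\ref{heiwe}) factorises across coordinates: writing $z^{(j)}=(x_{0,j},p_{0,j})$, the phase $p_0\cdot x-\tfrac12 p_0\cdot x_0$ splits additively, so one reads off
\begin{equation}
\widehat{T}(z)\phi=\widehat{T}_1(z^{(1)})\phi_1\otimes\cdots\otimes\widehat{T}_1(z^{(n)})\phi_1,
\end{equation}
where $\widehat{T}_1$ denotes the one-dimensional Heisenberg operator. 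Consequently, for any pure tensor $\psi=f_1\otimes\cdots\otimes f_n$,
\begin{equation}
\sum_{z\in\Lambda_{\alpha\beta}}|(\psi|\widehat{T}(z)\phi)|^2=\prod_{j=1}^n\sum_{(k_j,\ell_j)\in\mathbb{Z}^2}|(f_j|\widehat{T}_1(k_j\alpha_j,\ell_j\beta_j)\phi_1)|^2.
\end{equation}

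For the sufficiency direction I would assume $\alpha_j\beta_j<2\pi\hbar$ for every $j$, so that Proposition \ref{Prop1} supplies frame bounds $0<a_j\leq b_j<\infty$ for each one-dimensional system $\mathcal{G}(\phi_1,\Lambda_{\alpha_j\beta_j})$. The factorisation identifies the frame operator $S$ of $\mathcal{G}(\phi,\Lambda_{\alpha\beta})$ with the tensor product $S_1\otimes\cdots\otimes S_n$ of the one-dimensional frame operators, which is boundedly invertible on $L^2(\mathbb{R}^n)$ with bounds $\prod_j a_j$ and $\prod_j b_j$. For the necessity I would argue by contraposition: if $\alpha_{j_0}\beta_{j_0}\geq 2\pi\hbar$ for some $j_0$, Proposition \ref{Prop1} (together with the automatic Bessel property of Gaussian Gabor systems in one dimension) forces the \emph{lower} frame inequality to fail for $\mathcal{G}(\phi_1,\Lambda_{\alpha_{j_0}\beta_{j_0}})$. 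Choosing unit vectors $g^{(m)}\in L^2(\mathbb{R})$ whose one-dimensional frame sums vanish in the limit, and tensoring with $\phi_1$ in every other slot, I obtain unit test vectors $\psi^{(m)}\in L^2(\mathbb{R}^n)$ whose $n$-dimensional frame sum equals the vanishing factor times the finite Gaussian lattice sums $\sum_{k,\ell}|A(\phi_1,\phi_1)(k\alpha_j,\ell\beta_j)|^2$ from the spectator coordinates; the total tends to $0$, contradicting the lower frame bound.

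The main obstacle is the necessity half, and within it two supporting facts: that the failure in Proposition \ref{Prop1} is really a failure of the lower bound (not merely of completeness), and that the spectator lattice sums over $j\neq j_0$ remain finite even when $\alpha_j\beta_j\geq 2\pi\hbar$. Both reduce to the Gaussian decay of the cross-ambiguity $A(\phi_1,\phi_1)$, which guarantees an automatic Bessel bound in one dimension irrespective of lattice density; once these are in hand, the rest of the argument is the algebra of tensor-product frame operators.
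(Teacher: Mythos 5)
Your argument is correct, and it is essentially the proof the paper is relying on: Proposition \ref{Prop2} is not proved in the text but quoted from Bourouihiya \cite{AB,AB2}, whose main theorem is exactly the tensor-product-of-frames statement you reconstruct (a tensor product of Bessel sequences is a frame for the tensor-product Hilbert space if and only if every factor is a frame). Two details you compress deserve to be written out. First, the factorization of the frame sum is established only on pure tensors, and a lower frame inequality does not extend from a total subset of $L^{2}(\mathbb{R}^{n})$ by density; your identification of the frame operator with $S_{1}\otimes\cdots\otimes S_{n}$ is what closes this gap (the commuting positive factors $I\otimes\cdots\otimes S_{j}\otimes\cdots\otimes I\geq a_{j}I$ multiply to give $S\geq\bigl(\textstyle\prod_{j}a_{j}\bigr)I$, and likewise for the upper bound), so that step should be made explicit rather than summarized as ``the algebra of tensor-product frame operators''. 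Second, in the necessity half the Bessel property of $\mathcal{G}(\phi_{1},\alpha_{j_{0}}\mathbb{Z}\times\beta_{j_{0}}\mathbb{Z})$ at critical or supercritical density is a bound on $\sum_{z}|(g|\widehat{T}(z)\phi_{1})|^{2}$ for \emph{every} $g\in L^{2}(\mathbb{R})$, which does not follow from the decay of $A(\phi_{1},\phi_{1})$ alone; it is the standard fact that a Gaussian window lies in the Wiener amalgam space $W(L^{\infty},\ell^{1})$ (or in Feichtinger's algebra), see \cite{Gro}. The decay of $A(\phi_{1},\phi_{1})$ suffices only for the finiteness of the spectator sums. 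With those two points supplied, your proof is complete and agrees with the cited source.
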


The problem of constructing multivariate Weyl--Heisenberg systems
$\mathcal{G}(\phi,\Lambda)$ with an arbitrary Gaussian $\phi$ and lattice
$\Lambda$ is reputedly difficult and has been tackled by many authors (see the
comments in \cite{GroLyu}, and the review in \cite{Gro2}). That problem
however becomes more easily tractable if one recasts it in terms of
phase-space objects such that Heisenberg operators and cross-Wigner function,
which allows one to uses the full power of the symplectic covariance machinery
familiar to mathematical physicists working in phase space quantum mechanics
\cite{Birk,Birkbis}. We are going to show that:

\begin{proposition}
\label{Theorem}Let the lattice $\Lambda_{\alpha\beta}=\alpha\mathbb{Z}%
^{n}\times\beta\mathbb{Z}^{n}$ be defined as above; let $\phi_{X,Y}$ be a
\textquotedblleft squeezed coherent state\textquotedblright, that is a
Gaussian of the type
\begin{equation}
\phi_{X,Y}(x)=\left(  \tfrac{1}{\pi\hbar}\right)  ^{n/4}(\det X)^{1/4}%
e^{-\tfrac{1}{2\hbar}(X+iY)x^{2}} \label{Gauss1}%
\end{equation}
where $X+iY$ is a complex symmetric $n\times n$ matrix with real part $X>0$.
Let $G=S^{T}S$ be the positive definite symplectic matrix where
\begin{equation}
S=%
\begin{pmatrix}
X^{1/2} & 0\\
X^{-1/2}Y & X^{-1/2}%
\end{pmatrix}
. \label{bi}%
\end{equation}
The Weyl--Heisenberg system $\mathcal{G}(\phi_{X,Y},G^{-1/2}\Lambda
_{\alpha\beta})$ is a frame if and only if $\alpha_{j}\beta_{j}<2\pi\hbar$ for
$1\leq j\leq n$.
\end{proposition}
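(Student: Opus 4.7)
The strategy is symplectic covariance: push the problem forward to the fiducial case handled by Proposition~\ref{Prop2}. The three ingredients I need are the projective metaplectic representation $S \mapsto \widehat{S}$ with its intertwining relation $\widehat{S}\,\widehat{T}(z)\,\widehat{S}^{-1} = \widehat{T}(Sz)$, the Wigner covariance $W(\widehat{M}\psi) = W\psi \circ M^{-1}$, and the fact that $\psi \in L^{2}(\mathbb{R}^{n})$ is determined by $W\psi$ up to a unimodular scalar.

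First I would check that $S$ in (\ref{bi}) is symplectic (immediate from $X=X^{T}$) and that $G=S^{T}S$ is therefore positive definite symplectic, so $G^{\pm 1/2}$ are well-defined positive symplectic matrices via functional calculus. Next I compute the Wigner functions. A standard Gaussian integration gives $W\phi_{0}(z) = (\pi\hbar)^{-n}e^{-|z|^{2}/\hbar}$ for the fiducial $\phi_{0}=\phi_{1}^{\otimes n}$, and an analogous but slightly longer computation yields $W\phi_{X,Y}(z) = (\pi\hbar)^{-n}e^{-Gz\cdot z/\hbar}$; the key algebraic identity is
\[
|Sz|^{2} = Xx\cdot x + X^{-1}(Yx+p)\cdot(Yx+p),\qquad z=(x,p),
\]
which is exactly the quadratic form produced by the integral. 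Applying Wigner covariance,
\[
W(\widehat{G^{-1/2}}\phi_{0})(z) = W\phi_{0}(G^{1/2}z) = (\pi\hbar)^{-n}e^{-|G^{1/2}z|^{2}/\hbar} = W\phi_{X,Y}(z),
\]
so by uniqueness $\widehat{G^{-1/2}}\phi_{0}=c\,\phi_{X,Y}$ with $|c|=1$.

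At this point the symplectic covariance finishes the argument: for each $z_{0} \in \Lambda_{\alpha\beta}$,
\[
\widehat{T}(G^{-1/2}z_{0})\,\phi_{X,Y} = \bar{c}\,\widehat{T}(G^{-1/2}z_{0})\widehat{G^{-1/2}}\phi_{0} = \bar{c}\,\widehat{G^{-1/2}}\,\widehat{T}(z_{0})\phi_{0},
\]
so $\mathcal{G}(\phi_{X,Y},G^{-1/2}\Lambda_{\alpha\beta}) = \bar{c}\,\widehat{G^{-1/2}}\,\mathcal{G}(\phi_{0},\Lambda_{\alpha\beta})$. Since $\widehat{G^{-1/2}}$ is unitary and $|\bar{c}|=1$, the frame inequalities (\ref{frame1}) transfer verbatim between the two systems. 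Proposition~\ref{Prop2} then says the right-hand system is a frame iff $\alpha_{j}\beta_{j} < 2\pi\hbar$ for all $j$, giving the claimed equivalence.

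The main obstacle I expect is the identification $\widehat{G^{-1/2}}\phi_{0} = c\,\phi_{X,Y}$, since the matrix $S$ in (\ref{bi}) has to be chosen precisely so that $S^{T}S$ equals the covariance matrix of $\phi_{X,Y}$; this forces the specific block-triangular form and the pairing of $X^{1/2}$ with $X^{-1/2}Y$. Once that identification is in hand, the rest is a formal transfer of frame bounds via a unitary intertwiner.
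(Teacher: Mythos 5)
Your proposal is correct and follows essentially the same route as the paper: compute $W\phi_{X,Y}(z)=(\pi\hbar)^{-n}e^{-Gz\cdot z/\hbar}$, use Wigner/metaplectic covariance to identify $\phi_{X,Y}$ (up to a unimodular constant) with the image of $\phi_{1}^{\otimes n}$ under a metaplectic operator covering $G^{-1/2}$, transfer the frame property through that unitary (your inline intertwining computation is exactly the paper's Lemma \ref{Lemma1}), and invoke Proposition \ref{Prop2}. The only cosmetic difference is the direction of the map --- you push $\phi_{0}$ forward to $\phi_{X,Y}$ while the paper pulls $\phi_{X,Y}$ back to $\phi_{0}$ --- and your version is, if anything, stated more consistently.
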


\section{Two Lemmas}

Let $\operatorname*{Mp}(2n,\mathbb{R})$ be the metaplectic group; we denote by
$\pi^{\operatorname*{Mp}}:\operatorname*{Mp}(2n,\mathbb{R})\longrightarrow
\operatorname*{Sp}(2n,\mathbb{R})$ the natural projection onto the symplectic
group. Recall that $\operatorname*{Mp}(2n,\mathbb{R})$ is a double cover of
$\operatorname*{Sp}(2n,\mathbb{R})$ consisting of unitary operators on
$L^{2}(\mathbb{R}^{n})$ \cite{Folland,Birk,Birkbis}. We recall the following
covariance property of the Heisenberg operator:%
\begin{equation}
\widehat{S}\widehat{T}(z)=\widehat{T}(Sz)\widehat{S}\text{ \ , \ }%
S=\pi^{\operatorname*{Mp}}(\widehat{S})\text{.} \label{cov1}%
\end{equation}

\begin{lemma}
\label{Lemma1}Let $\mathcal{G}(\phi,\Lambda)$ be a Weyl--Heisenberg system,
and $\widehat{S}\in\operatorname*{Mp}(2n,\mathbb{R})$, $S=\pi
^{\operatorname*{Mp}}(\widehat{S})$. Then $\mathcal{G}(\phi,\Lambda)$ is a
frame in $L^{2}(\mathbb{R}^{n})$ if and only if $\mathcal{G}(\widehat{S}%
\phi,S\Lambda)$ is a frame in $L^{2}(\mathbb{R}^{n})$.
\end{lemma}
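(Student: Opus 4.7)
The plan is to leverage two facts: the metaplectic operator $\widehat{S}$ is unitary on $L^{2}(\mathbb{R}^{n})$, and the intertwining relation (\ref{cov1}) exchanges the roles of $\Lambda$ and $S\Lambda$ in a controlled way. Since the frame property is a pair of two-sided inequalities with explicit constants $a,b>0$, I expect to show that the inequalities for $\mathcal{G}(\phi,\Lambda)$ and $\mathcal{G}(\widehat{S}\phi,S\Lambda)$ are in fact the \emph{same} inequalities, after a unitary change of variables in $L^{2}(\mathbb{R}^{n})$. Thus the same pair $(a,b)$ will work for both systems, and the equivalence of the frame property becomes immediate.

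Concretely, I would start by fixing arbitrary $\psi\in L^{2}(\mathbb{R}^{n})$ and writing $\psi=\widehat{S}\psi'$ with $\psi'=\widehat{S}^{-1}\psi$; unitarity of $\widehat{S}$ gives $\Vert\psi\Vert=\Vert\psi'\Vert$, and as $\psi$ ranges over $L^{2}(\mathbb{R}^{n})$ so does $\psi'$. Next I would rewrite a generic term of the middle sum in (\ref{frame1}) for the system $\mathcal{G}(\widehat{S}\phi,S\Lambda)$:
\begin{equation*}
(\psi\,|\,\widehat{T}(Sz)\widehat{S}\phi) = (\widehat{S}\psi'\,|\,\widehat{T}(Sz)\widehat{S}\phi).
\end{equation*}
Applying the covariance identity (\ref{cov1}) in the form $\widehat{T}(Sz)\widehat{S}=\widehat{S}\widehat{T}(z)$, and then using the unitarity of $\widehat{S}$ once more, this inner product equals $(\psi'\,|\,\widehat{T}(z)\phi)$. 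Taking moduli squared and summing, the change of variables $z\mapsto Sz$ (a bijection $\Lambda\to S\Lambda$) yields
\begin{equation*}
\sum_{z'\in S\Lambda}|(\psi\,|\,\widehat{T}(z')\widehat{S}\phi)|^{2}=\sum_{z\in\Lambda}|(\psi'\,|\,\widehat{T}(z)\phi)|^{2}.
\end{equation*}

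Combining this equality of sums with $\Vert\psi\Vert^{2}=\Vert\psi'\Vert^{2}$ shows that the frame inequalities for $\mathcal{G}(\widehat{S}\phi,S\Lambda)$ tested against $\psi$ coincide with those for $\mathcal{G}(\phi,\Lambda)$ tested against $\psi'$. Therefore a pair of constants $(a,b)$ works for one system if and only if it works for the other, and the lemma follows. There is no real obstacle here: the only point to be careful about is bookkeeping the two applications of unitarity (once to the left factor and once implicitly in the identification of test vectors), and the direction in which (\ref{cov1}) is applied so that $\widehat{S}$ lands next to $\phi$ rather than in front of $\widehat{T}(z)$.
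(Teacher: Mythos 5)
Your argument is correct and is essentially the paper's own proof: both rest on the covariance identity (\ref{cov1}) to move $\widehat{S}$ next to $\widehat{T}$, the unitarity of $\widehat{S}$ to transfer it onto the test vector, and the bijection $\Lambda\to S\Lambda$ to reindex the sum. The only cosmetic difference is that you name the transformed test vector $\psi'=\widehat{S}^{-1}\psi$ explicitly, whereas the paper leaves $\widehat{S}^{-1}\psi$ inside the inner product.
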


\begin{proof}
(See \cite{Birkbis}, Chapter 8). We have, using (\ref{cov1}),%
\[
\sum_{z\in S\Lambda}|(\psi|\widehat{T}(z)\widehat{S}\phi)|^{2}=\sum_{z\in
S\Lambda}|(\psi|\widehat{S}\widehat{T}(S^{-1}z)\phi)|^{2}=\sum_{z\in\Lambda
}|(\widehat{S}^{-1}\psi|\widehat{T}(z)\phi)|^{2}%
\]
hence the result since $||\widehat{S}^{-1}\psi||=||\psi||$.
\end{proof}

The second Lemma gives an explicit formula for the Wigner transform of a
squeezed coherent state. Recall that for $\phi\in L^{2}(\mathbb{R}^{n})$
\begin{equation}
W\psi(z)=\left(  \tfrac{1}{2\pi\hbar}\right)  ^{n}\int_{\mathbb{R}^{n}%
}e^{-\frac{i}{\hbar}p\cdot y}\psi(x+\tfrac{1}{2}y)\overline{\psi(x-\tfrac
{1}{2}y)}dy. \label{wigo1}%
\end{equation}

\begin{lemma}
\label{Lemma2}Let $\phi_{X,Y}$ be the Gaussian (\ref{Gauss1}). We have%
\begin{equation}
W\phi_{X,Y}(z)=\left(  \tfrac{1}{\pi\hbar}\right)  ^{n}e^{-\tfrac{1}{\hbar
}Gz^{2}} \label{phagauss}%
\end{equation}
where $G\in\operatorname*{Sp}(2n,\mathbb{R})$ is the positive-definite matrix
\begin{equation}
G=%
\begin{pmatrix}
X+YX^{-1}Y & YX^{-1}\\
X^{-1}Y & X^{-1}%
\end{pmatrix}
=S^{T}S \label{G}%
\end{equation}
where the symplectic matrix $S$ is given by (\ref{bi}).
\end{lemma}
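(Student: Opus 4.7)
The plan is to exploit the symplectic covariance of the Wigner transform: if $\widehat{A}\in\operatorname*{Mp}(2n,\mathbb{R})$ projects to $A\in\operatorname*{Sp}(2n,\mathbb{R})$, then $W(\widehat{A}\psi)(z)=W\psi(A^{-1}z)$. I would begin from the isotropic Gaussian $\phi_{0}(x)=(\pi\hbar)^{-n/4}e^{-|x|^{2}/2\hbar}$, for which an elementary Gaussian integral in $y$ applied to (\ref{wigo1}) yields $W\phi_{0}(z)=(\pi\hbar)^{-n}e^{-|z|^{2}/\hbar}$. The strategy is then to realize $\phi_{X,Y}$ as $\widehat{A}\phi_{0}$ for a metaplectic operator projecting to $S^{-1}$; covariance will then produce $W\phi_{X,Y}(z)=W\phi_{0}(Sz)=(\pi\hbar)^{-n}e^{-S^{T}Sz^{2}/\hbar}$, which is precisely (\ref{phagauss}).

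To construct $\widehat{A}$ I would factor the passage $\phi_{0}\mapsto\phi_{X,Y}$ into two elementary metaplectic operations. First, the dilation $\widehat{M}_{X^{1/2}}\psi(x)=(\det X)^{1/4}\psi(X^{1/2}x)$ sends $\phi_{0}$ to $(\pi\hbar)^{-n/4}(\det X)^{1/4}e^{-Xx^{2}/2\hbar}$; then the quadratic multiplier $\widehat{V}_{-Y}\psi(x)=e^{-iYx^{2}/2\hbar}\psi(x)$ introduces the imaginary exponent and produces $\phi_{X,Y}$. A short computation using (\ref{cov1}) on the Heisenberg generators $\widehat{T}(x_{0},0)$ and $\widehat{T}(0,p_{0})$ identifies the corresponding symplectic matrices as
\[
M_{X^{1/2}}=\begin{pmatrix}X^{-1/2} & 0\\ 0 & X^{1/2}\end{pmatrix},\qquad V_{-Y}=\begin{pmatrix}I & 0\\ -Y & I\end{pmatrix},
\]
whose product
\[
V_{-Y}M_{X^{1/2}}=\begin{pmatrix}X^{-1/2} & 0\\ -YX^{-1/2} & X^{1/2}\end{pmatrix}
\]
is, by a one-line block inversion, precisely $S^{-1}$ for the $S$ of (\ref{bi}). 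This closes the covariance argument and delivers (\ref{phagauss}); the explicit block form (\ref{G}) then follows by multiplying out $S^{T}S$.

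The main obstacle I anticipate is purely convention bookkeeping: one must fix the sign of the chirp to match the desired imaginary exponent, and verify that the dilation $\psi\mapsto\sqrt{\det L}\,\psi(L\,\cdot)$ projects to $\operatorname{diag}(L^{-1},L^{T})$ rather than to its transpose or inverse, so that the composition lands on $S^{-1}$ rather than on $S$ itself. As a fallback avoiding the metaplectic machinery altogether, the identity can instead be obtained by substituting (\ref{Gauss1}) directly into (\ref{wigo1}) and completing the square on the resulting Gaussian in $y$; this yields the same answer but is considerably more laborious.
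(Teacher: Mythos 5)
Your argument is correct, and it supplies an actual proof where the paper gives none: the paper's ``proof'' of Lemma \ref{Lemma2} is only the citation ``See \cite{Birk,Birkbis}.'' Your route --- compute $W\phi_{0}(z)=(\pi\hbar)^{-n}e^{-|z|^{2}/\hbar}$ for the standard Gaussian by a direct Gaussian integral, realize $\phi_{X,Y}=\widehat{V}_{-Y}\widehat{M}_{X^{1/2}}\phi_{0}$, and transport the result by the covariance $W(\widehat{A}\psi)(z)=W\psi(A^{-1}z)$ --- is exactly the mechanism the cited monographs use and the same covariance formula (\ref{cov2}) the paper itself invokes in the proof of Proposition \ref{Theorem}, so in spirit you are reconstructing the intended argument rather than diverging from it. The bookkeeping you flag as the main risk checks out: with the paper's convention (\ref{heiwe}) one indeed gets $\pi^{\operatorname*{Mp}}(\widehat{M}_{X^{1/2}})=\operatorname{diag}(X^{-1/2},X^{1/2})$ and $\pi^{\operatorname*{Mp}}(\widehat{V}_{-Y})=\left(\begin{smallmatrix}I&0\\-Y&I\end{smallmatrix}\right)$, and
\[
S\,\bigl(V_{-Y}M_{X^{1/2}}\bigr)=
\begin{pmatrix}
X^{1/2} & 0\\
X^{-1/2}Y & X^{-1/2}
\end{pmatrix}
\begin{pmatrix}
X^{-1/2} & 0\\
-YX^{-1/2} & X^{1/2}
\end{pmatrix}
=I,
\]
so the composite does project to $S^{-1}$, giving $W\phi_{X,Y}(z)=W\phi_{0}(Sz)=(\pi\hbar)^{-n}e^{-S^{T}Sz^{2}/\hbar}$ and, by multiplying out $S^{T}S$, the block form (\ref{G}). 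The only cosmetic caveat is that the concrete operators you write down are one of the two metaplectic lifts, so the identity $\phi_{X,Y}=\widehat{V}_{-Y}\widehat{M}_{X^{1/2}}\phi_{0}$ holds exactly with your normalizations (and the Wigner transform is phase-insensitive in any case); your fallback of brute-force substitution into (\ref{wigo1}) would also work but, as you say, is needlessly laborious.
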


\begin{proof}
See \cite{Birk,Birkbis}.
\end{proof}

\section{Proof of Proposition \ref{Theorem}}

Recall \cite{Birk,Birkbis} the following symplectic covariance property of the
Wigner transform:
\begin{equation}
W(\psi,\phi)(S^{-1}z)=W(\widehat{S}\psi,\widehat{S}\phi)(z). \label{cov2}%
\end{equation}
We note that $\mathcal{G}(\phi,\Lambda)$ is a frame if and only if
$\mathcal{G}(c\phi,\Lambda)$ is a frame when $c\in\mathbb{C}$ is a complex number.

The matrix $G$ defined by (\ref{G}) in Lemma \ref{Lemma2} is both
positive-definite and symplectic hence there exists $U\in\operatorname*{Sp}%
(2n,\mathbb{R})\cap O(2n,\mathbb{R})$ such that $UGU^{T}=D$ is diagonal
\cite{Folland,Birk}. Let $\lambda_{1},...,\lambda_{2n}$ be the eigenvalues of
$G$. Since the eigenvalues of a positive-definite symplectic matrix occur in
pairs $(\lambda_{j},1/\lambda_{j})$ we may assume that $\lambda_{1}\geq
...\geq\lambda_{n}\geq1$, hence these eigenvalues $\lambda_{j}$ can be ordered
as follows: $\lambda_{1}\geq\cdot\cdot\cdot\geq\lambda_{n}\geq1\geq\lambda
_{n}^{-1}\geq\cdot\cdot\cdot\geq\lambda_{1}^{-1}$. We thus have%
\[
G=U^{T}DU=U^{T}%
\begin{pmatrix}
\Delta & 0\\
0 & \Delta^{-1}%
\end{pmatrix}
U
\]
with
\begin{equation}
\Delta=\operatorname{diag}(\lambda_{1},...,\lambda_{n}) \label{delta}%
\end{equation}
and hence%
\[
G^{-1/2}=U^{T}%
\begin{pmatrix}
\Delta^{-1/2} & 0\\
0 & \Delta^{1/2}%
\end{pmatrix}
U\in\operatorname*{Sp}(2n,\mathbb{R}).
\]

We are going to show that the Gaussian $\phi_{X,Y}$ becomes a tensor product
of elementary one-dimensional Gaussians if transformed by a suitable
metaplectic operator. Let in fact $\widehat{S_{G}}\in\operatorname*{Mp}%
(2n,\mathbb{R})$ be one of the two metaplectic operators such that
$\pi^{\operatorname*{Mp}}(\widehat{S_{G}})=G^{-1/2}$. Formulas (\ref{phagauss}%
) and (\ref{cov2}) imply that
\begin{equation}
W(\widehat{S_{G}}\phi_{X,Y})(z)=W(\phi_{X,Y})(G^{-1/2}z)=\left(  \tfrac{1}%
{\pi\hbar}\right)  ^{n}e^{-\tfrac{1}{\hbar}|z|^{2}} \label{wsphi}%
\end{equation}
and hence $\widehat{S_{G}}\phi_{X,Y}=c\phi_{1}\otimes\cdot\cdot\cdot
\otimes\phi_{1}$ for some complex constant with $|c|=1$. Thus, in view of
Lemma \ref{Lemma1}, $\mathcal{G}(\phi_{X,Y},G^{-1/2}\Lambda_{\alpha\beta})$ is
a frame if and only if $\mathcal{G}(\phi_{1}\otimes\cdot\cdot\cdot\otimes
\phi_{1},\Lambda_{\alpha\beta})$ is a frame. But this is the case if and only
if $\alpha_{j}\beta_{j}<2\pi\hbar$ for $1\leq j\leq n$ in view of Proposition
\ref{Prop2}.

E-Mail: maurice.de.gosson@univie.ac.at

\end{document}